\documentclass[letterpaper,11pt]{article}
\usepackage{amsmath, amssymb}
\usepackage{color}
\usepackage{fullpage}
\usepackage[numbers]{natbib}
\usepackage[noend]{algorithmic}
\usepackage{tikz}
\usepackage{enumerate}
\usepackage{thm-restate}
\usepackage{graphicx}
\usepackage{amsopn}
\usepackage{caption}
\usepackage{hyperref}
\usepackage{subcaption}
\usepackage{zerosum,csquotes}
\usepackage{enumitem,linegoal}
\usepackage[linesnumbered,ruled,vlined]{algorithm2e}
\usepackage{comment}	
\usepackage{ctable}					

\usepackage{mathtools}
\usepackage[flushleft]{threeparttable}
\usepackage[normalem]{ulem}

\usepackage{footnote}
\makesavenoteenv{tabular}
\makesavenoteenv{table}

\usepackage[margin=1in]{geometry}


 \newtheorem{theorem}{Theorem}[section]
 \newtheorem{lemma}[theorem]{Lemma}
 \newtheorem{observation2}[theorem]{Observation}

 \newtheorem{definition}[theorem]{Definition}

\makeatletter
\def\GrabProofArgument[#1]{ #1: \egroup\ignorespaces}
\def\proof{\noindent\textbf\bgroup Proof%
	\@ifnextchar[{\GrabProofArgument}{. \egroup\ignorespaces}}

\makeatother


\DeclareMathOperator{\ev}{EVAL}
\DeclareMathOperator{\ct}{CUT}
\newcommand{\citeboth}[1]{\citeauthor{#1} \cite{#1}}

\newcommand*\samethanks[1][\value{footnote}]{\footnotemark[#1]}

\newcounter{proccnt}

\newcommand{\konote}[1]{}

\title{On the Complexity of Chore Division}

\author{
	Alireza Farhadi \thanks{University of Maryland. Email: \texttt{\{farhadi, hajiagha\}@cs.umd.edu}.}
	\thanks{Supported in part by NSF CAREER award CCF-1053605,  NSF BIGDATA grant IIS-1546108, NSF AF:Medium grant CCF-1161365, DARPA GRAPHS/AFOSR grant FA9550-12-1-0423, and another DARPA SIMPLEX grant.}
	\and MohammadTaghi Hajiaghayi \samethanks[1] \samethanks[2]
}

\begin{document}
	\newcommand{\ignore}[1]{}
\renewcommand{\theenumi}{(\roman{enumi}).}
\renewcommand{\labelenumi}{\theenumi}
\sloppy

%
%


\maketitle

\thispagestyle{empty}

\begin{abstract}
We study the proportional chore division problem where a protocol wants to divide an undesirable object, called chore, among $n$ different players. The goal is to find an allocation such that the cost of the chore assigned to each player be at most $1/n$ of the total cost. This problem is the dual variant of the cake cutting problem in which we want to allocate a desirable object. \citeboth{edmonds2006cake} showed that any protocol for the proportional cake cutting must use at least $\Omega(n \log n)$ queries in the worst case, however, finding a lower bound for the proportional chore division remained an interesting open problem. We show that chore division and cake cutting problems are closely related to each other and provide an $\Omega(n \log n)$ lower bound for chore division.
\end{abstract}
\section{Introduction}

The chore division problem is the problem of fairly allocating a divisible undesirable object among $n$ players. Imagine that we want to divide up a job among some players. There are many ways to assign this job to them, especially when players have a different evaluation of cost for each part of the job. For example, one might prefer doing certain tasks and other may not be good at them. This gives rise to an important question: How can one fairly assign a task to others?

The problem of fairly allocating an undesirable object, called chore, was first introduced by \citeboth {gardner1978aha} in the 1970s. Many definitions of fairness have been proposed for the chore division. The most important ones are \textit{proportionality} and \textit{envy-freeness}. An allocation of the chore is \textit{proportional} if everyone receives at most $1/n$ of the chore in his perspective. The other definition of fairness is envy-freeness. An allocation is \textit{envy-free} if each player receives a part that he thinks is the smallest. 

The chore division is the dual problem of the well-studied cake cutting problem in which we want to fairly allocate a divisible good among players. This problem was introduced in the 1940s, and popularized by \citeboth{robertson1998cake}. The same criteria of fairness can also be defined for this problem. For the case of two players, the simple \textit{cut-and-choose} protocol provides both proportionality and envy-freeness. In this protocol, one player cuts the cake into two equally preferred pieces, and the other chooses the best piece.

Despite the simple algorithm for the two-player proportional allocation, finding a fair allocation for more players is more challenging. In 1948, \citeboth{steinhaus1948} proposed a proportional protocol for three players. Later, Banach, Knaster, and Steinhaus proposed an $O(n^2)$ protocol inspired by the cut-and-choose protocol for proportional allocation among $n$ players. \citeboth{even1984note} improved this result by providing an $O(n \log n)$ divide and conquer protocol. Also, they showed that no protocol can proportionally allocate the cake using less than $n$ cuts; however this lower bound was not tight. The main difficulty of obtaining any lower bound for the cake cutting problem was the lack of any formal way to represent protocols. Finally, \citeboth{robertson1998cake} gave a formal model for cake cutting protocols. Their model covers almost all discrete cake cutting protocols. Later, \citeboth{edmonds2006cake} provided an $\Omega( n \log n)$ lower bound for the proportional cake cutting. Their result shows that the proportional protocol by Even and Paz is asymptotically tight.

However, finding an envy-free allocation seems to be much harder. For a long time, the only known discrete and bounded envy-free protocols were for $n \le 3$. Every other protocol required an unbounded number of queries \cite{brams1995envy,pikhurko2000envy}. \citeboth{alijani2017envy} gave a bounded protocol for the envy-free allocation under different assumptions on the valuation functions. In a recent work, \citeboth{procaccia2009thou} proved an $\Omega(n^2)$ lower bound for envy-free allocation which shows that finding an envy-free allocation is truly harder than proportional allocation. In a recent breakthrough, Aziz and Mackenzie \cite{aziz4,azizn} provided the first discrete and bounded protocol for envy-free allocation. Their protocol requires $n^{n^{n^{n^{n^n}}}}$ queries in the worst case.

Despite all the studies in the cake cutting, the results known for the chore division are very limited. The same divide and conquer algorithm by Even and Paz, finds a proportional allocation using $O( n \log n)$ queries. However, no lower bound was known for this problem. For the envy-free chore division, \citeboth{peterson2009n} gave an envy-free protocol for $n$ players, although their protocol is unbounded. Another protocol by \citeboth{peterson2002four} finds an envy-free allocation for 4 players using \textit{moving-knife} procedure. However, the moving-knife procedure is not discrete and could not be captured using any discrete protocol. Only recently, a protocol by \citeboth{dehghani2018envy} provides the first discrete and bounded protocol for the envy-free chore division. 

In this paper, we give an $\Omega( n \log n)$  lower bound for the proportional chore division. Our method shows a close relation between chore division and cake cutting. We introduce a subproblem similar to \textit{thin-rich game} introduced in \cite{edmonds2006cake}, and show that solving both proportional cake cutting and proportional chore division requires solving this problem, and solving this problem is hard. Our method can also be seen roughly as a reduction from proportional chore division to proportional cake cutting. We introduce the notion of \textit{dual} of a valuation function, and we show that how we can use dual functions to reduce some problems in chore division to similar problems in cake cutting. Since envy-freeness implies proportionality, our result also shows that any envy-free chore division protocol requires at least $\Omega( n \log n)$ queries.


\section{Preliminaries}
In chore division (resp. cake cutting) problem, we are asked to partition a divisible undesirable (resp. desirable) object, usually modeled by the interval $[0,1]$, among $n$ players. Let $N = \{1,2,\ldots,n\}$ be the set of players. Each player $i$ has a valuation function $v_i$ that indicates, given a subinterval $I\subseteq [0,1]$, the cost (resp. profit) of that interval for the player $i$.
For an interval $[x,y]$, we use $v_i(x,y)$ to denote the player $i$'s valuation for this interval. We assume that valuation functions are \textit{non-negative}, \textit{additive}, \textit{divisible} and \textit{normalized}, in other words, for each player $i$, his valuation function $v_i$ satisfies the following properties:
\begin{itemize}
\item \textit{Non-negative}: $v_i(I) \ge 0$ for every subinterval $I$ in $[0,1]$.
\item \textit{Additive}: $v_i(I_1 \cup I_2) = v_i(I_1) + v_i(I_2)$ for all disjoint intervals $I_1$ and $I_2$.
\item \textit{Divisible}: for every interval $I$ and $0 \le \lambda \le 1$, there exists an interval $I' \subseteq I$ such that $v_i(I')= \lambda v_i(I)$.
\item \textit{Normalized}: $v_i(0,1)=1$.
\end{itemize}

For an interval $I=[x,y]$, we denote $Left(I)=x$ and $Right(I)=y$. Also, we use $|I|=y-x$ to denote the width of $I$. We say that an interval $I$ is non-empty if $|I|>0$.

 We say that $P$ is a \textit{piece} of the chore if it is union of finite disjoint intervals, i.e., $P=\cup_{i=1}^{k} I_i$. For a piece $P$, we use $|P|$ to denote its width which is
$$
|P| = \sum_{i=1}^{k} |I_i| = \sum_{i=1}^{k} Right(I_i)-Left(I_i) \,.
$$
Similarly, we use $v(P)$ to denote the value of the function $v$ for $P$. It follows from additivity of valuation functions that
$$
v(P) = \sum_{i=1}^{k} v(I_i) \,.
$$
Also, we use $D_v(P)=v(P)/|P|$ to denote the density of $P$. 

The complexity of a protocol is the number of queries it makes. We use the standard Robertson and Webb query model which allows two types of queries on a valuation function $v$.
\begin{itemize}
\item ${\ev}_{v}(x,y) : $ returns $v(x,y)$.
\item ${\ct}_{v} (x,r) : $ returns $y \in [0,1]$ such that $v(x,y)=r$ or declares that answer does not exist.
\end{itemize}

An \textit{allocation} $X=(X_1,X_2,\cdots,X_n)$ is a partitioning of chore into $n$ parts  $X_1,X_2,\cdots,X_n$ such that each player $i$ receives $X_i$. We say that an allocation $X=(X_1,X_2,\cdots,X_n)$ is proportional if $v_i(X_i) \le 1/n$ for every player $i$.
\section{Lower Bound on Chore Division}
In this section we provide an $\Omega(n \log n)$ lower bound for the proportional chore division. In the cake cutting, \citeboth{edmonds2006cake} presents an $\Omega(n \log n)$ lower bound by showing that finding a dense part for an arbitrary valuation function is hard and a protocol must use at least $\Omega( \log n)$ queries. Later, they show that any proportional protocol for cake cutting finds a dense part for at least $\Omega(n)$ of $n$ arbitrary valuation functions.


For the chore division problem, we consider a special type of valuation functions in which density of each piece of the chore is at least $1/2$. Then, we represent a mapping from these valuation functions to low-density valuation functions, and show that finding any dense piece in low-density valuation functions requires $\Omega( \log n)$ number of queries. Finally, we provide a lower-bound for the proportional chore division by showing that using any protocol for this problem, we can find a dense piece for at least $\Omega(n)$ of $n$ arbitrary low-density valuation functions.

\begin{definition}
Given values $\alpha$ and $\beta$ such that $0 \le \alpha \le 1 \le \beta$, we say that a valuation function $v$ is \textit{$(\alpha,\beta)$-dense} if $\alpha \le D_v(I) \le \beta$ for every non-empty subinterval $I$ in $[0,1]$.

Moreover, a valuation function $v$ is positive if $D_v(I) >0$ for every non-empty subinterval $I$ in $[0,1]$.
\end{definition}
An example of $(\alpha,\beta)$-dense valuation functions is uniform functions. Since density of every interval in a uniform valuation function is $1$, these valuation functions are $(1,1)$-dense. 
 
For an arbitrary positive valuation function $v$, we define its dual function and show that every query on the dual function can be answered using $O(1)$ queries on the $v$. Later, we show that the dual function is an appropriate mapping from high-density functions to low-density functions. 
\begin{definition}
For a positive valuation function $v$, we use $v^{*}$ to denote its dual function and define it as follows.
$$
v^{*}(x,y) = {\ct}_{v}(0,y) - {\ct}_{v}(0,x)
$$
for every subinterval $[x,y]$ in $[0,1]$
\end{definition}
Note that in a positive valuation function, every ${\ct}_v(x,y)$ query has a unique answer, therefore the function above is well-defined for positive functions.
\begin{lemma}
\label{O1}
For a positive valuation function $v$ and its dual function $v^{*}$ the following holds.
\begin{itemize}
\item ${\ev}_{v^{*}}(x,y) = {\ct}_{v}(0,y) - {\ct}_{v}(0,x)$
\item ${\ct}_{v^{*}}(x,r) = {\ev}_{v}(0,{\ct}_{v}(0,x)+r)$
\end{itemize}
So we can answer each query on $v^{*}$ using $O(1)$ queries on $v$.
\end{lemma}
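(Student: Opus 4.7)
The plan is to proceed by a direct definition chase, using the key observation that for a positive valuation function $v$, the map $F\colon [0,1]\to [0,1]$ defined by $F(t)=v(0,t)$ is a strictly increasing continuous bijection. Positivity guarantees $F$ is strictly increasing, additivity plus normalization gives $F(0)=0$ and $F(1)=1$, and divisibility implies continuity (so every intermediate value is attained). Consequently $\mathrm{CUT}_v(0,\cdot)$ is well-defined on $[0,1]$ and equals $F^{-1}$. This is the one conceptual point on which everything else rests.

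The first bullet is then immediate: $\mathrm{EVAL}_{v^*}(x,y)$ is by definition $v^*(x,y)$, and $v^*(x,y)=\mathrm{CUT}_v(0,y)-\mathrm{CUT}_v(0,x)$ is exactly how the dual was defined. No work is required beyond observing this.

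For the second bullet, I would unfold the definition of $\mathrm{CUT}_{v^*}(x,r)$ and substitute the first bullet. We seek the unique $y$ with $v^*(x,y)=r$, i.e.\ $F^{-1}(y)-F^{-1}(x)=r$, which gives $y=F\bigl(F^{-1}(x)+r\bigr)$. Rewriting in Robertson--Webb notation yields $y=\mathrm{EVAL}_v\bigl(0,\mathrm{CUT}_v(0,x)+r\bigr)$. Uniqueness of this $y$ follows from the strict monotonicity of $F$, and feasibility (the argument lying in $[0,1]$) holds precisely when the $\mathrm{CUT}_{v^*}$ query is feasible, since $v^*(0,1)=\mathrm{CUT}_v(0,1)=1$ and thus the dual has the same normalization.

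Finally, to conclude the $O(1)$ simulation, I would simply read off the cost from the two identities: an $\mathrm{EVAL}_{v^*}$ query on $v^*$ costs two $\mathrm{CUT}$ queries on $v$, and a $\mathrm{CUT}_{v^*}$ query costs one $\mathrm{CUT}$ followed by one $\mathrm{EVAL}$ on $v$. I do not anticipate any real obstacle; the only subtlety is making sure positivity is invoked explicitly so that $\mathrm{CUT}_v(0,\cdot)$ is a genuine inverse rather than merely a left inverse, which is what allows $v^{*}$ to be a well-defined valuation function and the second identity to hold for all feasible $(x,r)$.
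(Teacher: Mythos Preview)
Your proposal is correct and follows essentially the same approach as the paper: both arguments rest on the observation that positivity makes $t\mapsto v(0,t)$ a bijection with inverse $\ct_v(0,\cdot)$, so the first identity is the definition of $v^*$ and the second follows by solving $v^*(x,y)=r$ for $y$. The only cosmetic difference is that you name the bijection $F$ explicitly, whereas the paper verifies the second bullet by plugging the claimed answer back in and invoking $\ct_v(0,\ev_v(0,x))=x$ directly.
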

\begin{proof}
Based on definition of the dual function, we have:
$$
{\ev}_{v^{*}}(x,y)= v^{*}(x,y) = {\ct}_{v}(0,y) - {\ct}_{v}(0,x) \,.
$$
For the cut query, it should return a $y$ such that $v^*(x,y)=r$. We complete the proof by showing that $v^*(x, {\ev}_{v}(0,{\ct}_{v}(0,x)+r))=r$. By the definition of dual function, we have
\begin{align*}
&v^{*}(x,{\ev}_{v}(0,{\ct}_{v}(0,x)+r)) \\
&= {\ct}_{v}(0,{\ev}_{v}(0,{\ct}_{v}(0,x)+r)) - {\ct}_{v}(0,x) \,.
\end{align*}
Note that for any positive valuation function $v$, and $0 \le x \le 1$, we have ${\ct}_v(0,{\ev}_v(0,x))=x$. Therefore,
\begin{align*}
&v^{*}(x,{\ev}_{v}(0,{\ct}_{v}(0,x)+r)) \\
&= {\ct}_{v}(0,{\ev}_{v}(0,{\ct}_{v}(0,x)+r)) - {\ct}_{v}(0,x) \\
&= {\ct}_{v}(0,x)+r - {\ct}_{v}(0,x) = r \,.
\end{align*}
\end{proof}\\
In the following observation, we show that for a valuation function $v$, dual function of ${v^{*}}$ is $v$.
\begin{observation2}
\label{dualdual}
Let $v$ be a positive valuation function and $v^*$ be its dual function, then dual of $v^*$ is $v$. 
\end{observation2}
\begin{proof}
Let function $u$ be the dual of $v^*$, then the valuation of $u$ for an interval $[x,y]$ is
$$
u(x,y) = {\ct}_{v^*}(0,y) - {\ct}_{v^*}(0,x) \,.
$$
Since $v^*$ is the dual of $v$, by Lemma \ref{O1} we have,
\begin{align*}
&u(x,y) = {\ct}_{v^*}(0,y) - {\ct}_{v^*}(0,x) \\
&= {\ev}_v(0,{\ct}_v(0,0)+y) - {\ev}_v(0,{\ct}_v(0,0)+x) \\
&= {\ev}_v(0,y) - {\ev}_v(0,x) = v(x,y) \,. 
\end{align*}
Therefore, valuation of $u$ for any interval is the same as valuation of $v$; hence, $u=v$.
\end{proof}

We introduce high-density and low-density pieces. \citeboth{edmonds2006cake} showed that a protocol must use at least $\Omega ( \log n)$ queries in order to find a high-density piece for an arbitrary valuation function. We expand their result by showing that finding a high-density piece for a positive $(0,2)$-dense valuation function is still hard.
\begin{definition}
A piece $X$ is heavy with respect to valuation function $v$ if its width is at most $1/n$ and the valuation of $v$ on this piece be at least $1/2n$, i.e., $|X| \le 1/n$ and $v(X) \ge 1/2n$.

Similarly, a piece is light with respect to $v$ if $|X| \ge 1/2n$ and $v(X) \le 1/n$.
\end{definition}
Note that heavy and light pieces are not exclusive, and a piece can be both heavy and light.

\begin{restatable}{theorem}{findheavy}
\label{findheavy}
Any protocol that finds a heavy piece for an arbitrary positive $(0,2)$-dense valuation function requires $\Omega(\log n)$ queries in the worst case.
\end{restatable}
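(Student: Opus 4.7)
The plan is to prove this by an adversary argument that adapts the strategy Edmonds and Pruhs used for their thin-rich game, but restricted so that every valuation function the adversary uses is positive and $(0,2)$-dense. Their argument cannot be invoked as a black box since it relies on concentrating mass at density far above $2$; the content of our theorem is that the same $\Omega(\log n)$ lower bound survives this restriction.

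First I would partition $[0,1]$ into $n$ equal cells of width $1/n$ and have the adversary label, online, roughly half of them as ``heavy'' cells of density approximately $2$ and the rest as ``light'' cells of density close to $0$ but still positive, chosen so that the total value is exactly $1$. A short computation shows that any heavy piece of width at most $1/n$ must overlap a heavy cell in a constant fraction of its width, and therefore touches only $O(1)$ cells. The adversary then maintains a family $\mathcal{F}$ of partial heavy/light labelings consistent with the answers so far. An $\ev$ query is answered by the cumulative value produced by the ``balanced'' labeling (heavy cells spread uniformly throughout every as-yet unresolved block), and a $\ct$ query is answered by the point where the balanced labeling first reaches the requested cumulative value.

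Next I would run a potential argument: each query splits at most one block of unresolved cells into two sub-blocks with committed width and value on each side, and this reduces the adversary's total placement flexibility by at most a constant factor. After $q < c\log n$ queries there is therefore still a block of width $\Omega(1)$ containing many unresolved cells whose heavy/light labels remain free. For any candidate interval $X$ of width at most $1/n$ the protocol outputs, $X$ meets only $O(1)$ cells, so we can complete $\mathcal{F}$ to a genuine positive $(0,2)$-dense valuation function $v$ in which every cell overlapping $X$ is light, forcing $v(X) < 1/(2n)$ and defeating the protocol.

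The main obstacle will be verifying that the balanced-response policy is always realizable by a single positive $(0,2)$-dense function after every query. A block of width $w$ with committed value $V$ requires $0 < V < 2w$ with enough slack to admit at least one legal placement of the prescribed number of heavy cells, so one must show that no interleaving of $\ev$ and $\ct$ queries concentrated in a single block can push $V/w$ arbitrarily close to the forbidden extremes $0$ or $2$. A secondary subtlety is handling an $X$ that straddles a query-induced breakpoint, where adequate flexibility must remain on both sides to simultaneously relabel all $O(1)$ cells of $X$ as light; this is accommodated by building an appropriate margin into the potential and by never letting any single query exhaust the flexibility of a block.
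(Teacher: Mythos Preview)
Your plan has a real gap. The proposed adversary—answer each query according to a fixed ``balanced'' heavy/light labeling—fails in one query: if the balanced labeling puts a heavy cell at position $j$, the protocol asks $\ev((j-1)/n,\,j/n)$, receives $\approx 2/n$, and outputs that cell. More generally, with only two possible densities on each cell, any $\ev$ or $\ct$ query touching a cell forces an answer that reveals its label, because the two consistent answers differ by a factor $\approx 2/\epsilon$. Your acknowledged ``main obstacle'' (realizability of the balanced response) is exactly this, and you have not resolved it; the potential claim that a block of width $\Omega(1)$ survives is also off—after $q$ queries pigeonhole gives only width $\Omega(1/q)$, and more to the point a single query at $1/n$ creates a width-$1/n$ block whose committed value already decides whether it is heavy.

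The deeper issue is that a one-level scheme with extreme densities cannot hide the heavy mass under the $(0,2)$ constraint, which is precisely the constraint that blocks a black-box appeal to Edmonds--Pruhs. The paper's fix is to spread the density bias over $\log_3 n$ levels: in its balanced-value-tree each edge multiplies the density by either $\beta=2^{6/\ln n}=1+O(1/\log n)$ or $3/2-\beta/2$, both within a $1\pm O(1/\log n)$ window, so the valuation stays $(0,2)$-dense automatically while a leaf of density $\ge 1/2$ provably needs $\Omega(\log n)$ heavy edges on its root path (Lemma~\ref{tooheavies}). The adversary then reveals at most two heavy edges per query along any path, so $\Omega(\log n)$ queries are forced. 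This slow hierarchical accumulation of density is the missing idea; your flat cell model collapses it to one level and loses the lower bound. If you tried to repair your scheme by making the adversary adaptively declare each queried region light and tracking slack across nested blocks, you would essentially be reconstructing the tree argument without the tree.
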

This theorem is our main tool to obtain a lower-bound for proportional chore division. First we show how we can use this theorem to prove the $\Omega( n \log n)$ lower bound for chore division, and then in the next section we provide a proof for this theorem.

We show that any protocol for this chore division problem requires $\Omega ( n \log n)$ queries even if all the players' valuation functions are $(1/2,\infty)$-dense. Specifically, we show that given $n$ arbitrary positive $(0,2)$-dense valuation functions, one can use their dual functions and any proportional chore division protocol to find a heavy piece for at least $\Omega(n)$ of them. First we show that if a valuation function $v$ is positive and $(0,2)$-dense, then its dual is $(1/2,\infty)$-dense.

\begin{lemma}
\label{dual}
The dual of a positive $(0,2)$-dense valuation function is positive and $(1/2,\infty)$-dense.
\end{lemma}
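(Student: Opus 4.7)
The plan is to compute the density of $v^{*}$ on an arbitrary non-empty subinterval $[x,y]\subseteq[0,1]$ directly from the definition and relate it to the density of $v$ on a corresponding subinterval. Specifically, set $a={\ct}_v(0,x)$ and $b={\ct}_v(0,y)$, so by definition $v(0,a)=x$ and $v(0,b)=y$, and by additivity $v(a,b)=y-x$. Because $v$ is positive, the cut queries return unique values in $[0,1]$ (so $v^{*}$ is well-defined), and moreover $y-x>0$ forces $a<b$ (otherwise $v(a,b)$ would vanish). Then
\[
D_{v^{*}}(x,y)=\frac{v^{*}(x,y)}{y-x}=\frac{b-a}{v(a,b)}=\frac{1}{D_v(a,b)}.
\]

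From this identity both claims fall out immediately. Since $v$ is $(0,2)$-dense we have $D_v(a,b)\le 2$, hence $D_{v^{*}}(x,y)\ge 1/2$, giving the lower density bound. The upper bound of the $(1/2,\infty)$-density condition is trivial. Positivity of $v^{*}$ also follows: $D_{v^{*}}(x,y)\ge 1/2>0$ for every non-empty subinterval $[x,y]$.

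There is not really a hard step here; the only thing to be careful about is to check that the cut queries $\ct_v(0,x)$ and $\ct_v(0,y)$ exist and are unique for all $x,y\in[0,1]$. Existence follows because $v$ is normalized (so $v(0,1)=1$) together with additivity and divisibility, and uniqueness is exactly the positivity of $v$. Once $v^{*}$ is confirmed to be well-defined as a valuation function on $[0,1]$, the reciprocal identity $D_{v^{*}}(x,y)=1/D_v(a,b)$ carries all the content of the lemma.
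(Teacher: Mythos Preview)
Your proof is correct and follows essentially the same approach as the paper: both set up the substitution $a=\ct_v(0,x)$, $b=\ct_v(0,y)$ (the paper uses $l,r$) and derive the reciprocal identity $D_{v^*}(x,y)=1/D_v(a,b)$, from which the $(1/2,\infty)$-density and positivity of $v^*$ are immediate. If anything, you are slightly more careful than the paper in noting why $a<b$ and why the cut queries are well-defined.
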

\begin{proof}
Suppose that $v$ is a positive $(0,2)$-dense valuation function. Let $[x,y]$ be a non-empty subinterval in $[0,1]$. We show that the density of the interval $[x,y]$ is at least $1/2$ with respect to $v^{*}$. For an interval $[x,y]$ we have $v^*(x,y)= {\ct}_v(0,y)-{\ct}_v(0,x)$. Therefore, we have
$$
D_{v^*} (x,y) = \dfrac{{\ct}_v(0,y)-{\ct}_v(0,x)}{y-x} \,.
$$
Note that $x={\ev}_v(0, {\ct}_v(0,x))$, and $y={\ev}_v(0, {\ct}_v(0,y))$. Therefore, by setting $l= {\ct}_v(0,x)$, and $r = {\ct}_v(0,y)$, we have $x={\ev}_v(0,l)$ and $y={\ev}_v(0,r)$. Therefore,
\begin{align*}
D_{v^*} (x,y) &= \dfrac{r-l}{{\ev}_v(0,r)-{\ev}_v(0,l)}\\
&= \dfrac{r-l}{{\ev}_v(l,r)} \\
&= \dfrac{1}{D_v(l,r)} \,.
\end{align*}
Since $v$ is positive $(0,2)$-dense, we have $0 < D_v(l,r) \le 2$, therefore,
$$
D_{v^*} (x,y) = \dfrac{1}{D_v(l,r)} \ge \dfrac{1}{2} \,.
$$
\end{proof}

Now, we show that if $n$ players all have a $(1/2,\infty)$-dense valuation functions, then in any proportional allocation of chore to these players, at least $n/3$ of allocated pieces are light.
\begin{lemma}
\label{light}
Given $n$ players with $(1/2,\infty)$-dense valuation functions $u_1,\cdots,u_n$, let $X_1,X_2,\cdots,X_n$ be any proportional allocation of the chore to the players such that $X_i$ is allocated to the player $i$, then at least $n/3$ of the allocated pieces are light for their owners.
\end{lemma}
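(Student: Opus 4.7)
The plan is to interpret the definition of \emph{light} directly in terms of the width of the allocated piece. Since the allocation is proportional, $u_i(X_i)\le 1/n$ for every player $i$, so the second condition ``$u(X)\le 1/n$'' in the definition of lightness is automatic, and a piece $X_i$ fails to be light precisely when $|X_i|<1/(2n)$. Thus the lemma reduces to showing that at most $2n/3$ of the allocated pieces can have width below $1/(2n)$.

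First I would lift the $(1/2,\infty)$-density bound from intervals to arbitrary pieces: if $X_i=\bigcup_j I_{j}$ is a finite disjoint union of intervals, then additivity gives
\[
u_i(X_i)=\sum_j u_i(I_j)\ \ge\ \sum_j \tfrac{|I_j|}{2}\ =\ \tfrac{|X_i|}{2}.
\]
Combined with proportionality $u_i(X_i)\le 1/n$, this yields the uniform upper bound $|X_i|\le 2/n$ for every $i$.

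Next I would carry out a simple counting argument. Let $L\subseteq\{1,\dots,n\}$ be the indices of the light pieces. Then for $i\in L$ we have $|X_i|\le 2/n$, and for $i\notin L$ we have $|X_i|<1/(2n)$. Summing over the partition of $[0,1]$,
\[
1\ =\ \sum_{i\in L}|X_i|+\sum_{i\notin L}|X_i|\ \le\ \frac{2|L|}{n}+\frac{n-|L|}{2n}\ =\ \frac{3|L|}{2n}+\frac{1}{2}.
\]
Rearranging gives $|L|\ge n/3$, which is the desired conclusion.

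There is essentially no serious obstacle here; the only point that requires care is verifying that the density lower bound extends from subintervals to pieces (which is immediate from additivity), and then choosing the two width bounds so that the weighted count resolves cleanly to $n/3$. The constants $1/(2n)$ and $2/n$ in the definitions are calibrated precisely so that this averaging argument gives the claimed fraction.
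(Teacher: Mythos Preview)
Your proof is correct and is essentially the same as the paper's: both use the $(1/2,\infty)$-density (extended to pieces by additivity) together with proportionality to get $|X_i|\le 2/n$, and then a counting/averaging argument on the total width $\sum_i |X_i|=1$ to bound the number of pieces with $|X_i|<1/(2n)$ by $2n/3$. The only cosmetic difference is that the paper phrases the count in terms of $t=n-|L|$ rather than $|L|$.
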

\begin{proof}
For each player $i$, we use $w_i$ to denote the width of the piece allocated to player $i$, i.e., $w_i = |X_i|$. Therefore, $\sum_{i=1}^{n} w_i =1$. Since the allocation is proportional we have $u_i(X_i) \le 1/n$ for every $u_i$. Also, Since the valuation functions are $(1/2,\infty)$-dense, we have $u_i(X_i) \ge w_i/2$ for every player $i$, therefore $w_i \le 2 u_i(X_i) \le 2/n$. Now assume that $t$ is the number of pieces with the width less than $1/2n$. Since the width of every other piece is at most $2/n$, the following holds.
$$
\dfrac{t}{2n}+ \dfrac{2(n-t)} {n} \ge 1 \Rightarrow t \le \dfrac{2n}{3} \,.
$$
Therefore at least $n-t \ge n/3$ of the $w_i$ are at least $1/2n$, and the width of at least $n/3$ of the $X_i$ are at least $1/2n$. Note that because of the proportionality, the value of each of these pieces is at most $1/n$ for its owner. Therefore these pieces are light.
\end{proof}

Now we present a mapping from light pieces to heavy pieces in the dual of the valuation function.
\begin{definition}
For an interval $I=[a,b]$ and a positive valuation function $v$, we define the dual of this interval with respect to $v$ as $I_v^{*}=[\ev_v(0,a),\ev_v(0,b)]$.

For a part $P$ which is union of finite disjoint intervals $I_1,\cdots, I_k$, we define the dual of $P$ as $P_v^{*}= \cup_{i=1}^{k} {I_i}_v^{*}$.
\end{definition}
\begin{lemma}
\label{heavy}
Let $P$ be a light piece with respect to $v$ where $v$ is a positive valuation function, then $P_{v}^{*}$ is heavy with respect to $v^{*}$.
\end{lemma}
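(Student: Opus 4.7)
The plan is to unpack both conditions defining ``heavy'' directly from the definition of the dual and verify them by a short calculation that swaps the roles of width and value. Write $P = \bigcup_{i=1}^{k} I_i$ with $I_i = [a_i, b_i]$ the disjoint intervals, so that by definition $P_v^* = \bigcup_{i=1}^{k} [\ev_v(0, a_i), \ev_v(0, b_i)]$; note these image intervals are disjoint because $\ev_v(0, \cdot)$ is monotone (since $v$ is positive).

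First I would bound the width $|P_v^*|$. Each constituent interval has width $\ev_v(0,b_i) - \ev_v(0,a_i) = v(a_i, b_i)$ by additivity of $v$, so summing gives $|P_v^*| = \sum_{i=1}^{k} v(I_i) = v(P)$. Since $P$ is light with respect to $v$, we have $v(P) \le 1/n$, hence $|P_v^*| \le 1/n$, which is the first requirement for $P_v^*$ to be heavy.

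Next I would compute $v^*(P_v^*)$ using the definition $v^*(x,y) = \ct_v(0,y) - \ct_v(0,x)$. For each image interval,
\begin{align*}
v^*\bigl(\ev_v(0,a_i), \ev_v(0,b_i)\bigr) &= \ct_v\bigl(0, \ev_v(0,b_i)\bigr) - \ct_v\bigl(0, \ev_v(0,a_i)\bigr) \\
&= b_i - a_i,
\end{align*}
where the second equality uses the identity $\ct_v(0, \ev_v(0, z)) = z$ for positive $v$ (already invoked in the proof of Lemma~\ref{O1}). Summing over $i$ and using additivity of $v^*$ across disjoint intervals yields $v^*(P_v^*) = \sum_{i=1}^{k} (b_i - a_i) = |P|$. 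Since $P$ is light, $|P| \ge 1/2n$, so $v^*(P_v^*) \ge 1/2n$, completing the verification.

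I do not anticipate a real obstacle: the argument is essentially the observation that taking the dual swaps widths with values at the level of any interval, and the ``light'' versus ``heavy'' thresholds ($1/n$ and $1/2n$) are symmetric under this swap. The only point requiring care is ensuring disjointness of the image intervals and the well-definedness of the relevant $\ct_v$ queries, both of which follow from $v$ being positive.
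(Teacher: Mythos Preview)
Your proof is correct and follows essentially the same route as the paper: both compute $|P_v^*| = v(P) \le 1/n$ and $v^*(P_v^*) = |P| \ge 1/2n$ by unwinding the definitions interval-by-interval and invoking the identity $\ct_v(0,\ev_v(0,z))=z$. Your explicit remark about disjointness of the image intervals (via monotonicity of $\ev_v(0,\cdot)$ for positive $v$) is a welcome clarification that the paper leaves implicit.
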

\begin{proof}
Suppose that $P= \cup_{i=1}^{k} I_i$, then $P_v^{*}= \cup_{i=1}^{k} {I_i}_v^{*}$. It follows that
\begin{align*}
|P_v^{*}|&=\sum_{i=1}^{k} |{I_i}_v^{*}|\\
&=\sum_{i=1}^{k} {\ev}_v(0,Right(I_i))- {\ev}_v(0,Left(I_i))\\&
 = \sum_{i=1}^{k} v(I_i) = v(P) \le \dfrac{1}{n} \,.
\end{align*}
Also, for an interval $I=[a,b]$, we have,
\begin{align*}
v^*(I_v^*) &= v^*(\ev_v(0,a),\ev_v(0,b)) \\
&= {\ct}_v(0,\ev_v(0,b)) - {\ct}_v(0,\ev_v(0,a)) 
\\&= b-a = |I| \,.
\end{align*}
Therefore,
$$
v^{*}(P_v^{*})=\sum_{i=1}^{k} v^{*}({I_i}_v^{*}) =  \sum_{i=1}^{k} |I_i| = |P| \ge \dfrac{1}{2n} \,.
$$
This completes the proof.
\end{proof}

Now we are ready to prove that complexity of any proportional chore division protocol is at least $\Omega( n \log n)$
\begin{theorem}
Any protocol for the proportional chore division makes at least $\Omega ( n \log n)$ queries in the worst case.
\end{theorem}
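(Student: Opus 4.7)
The plan is a reduction from the task of finding heavy pieces for many valuation functions to proportional chore division. Assume, toward a contradiction, a proportional chore division protocol $A$ that makes at most $f(n)$ queries in the worst case. I would show $f(n) = \Omega(n \log n)$ by using $A$ to locate heavy pieces for a constant fraction of $n$ arbitrary positive $(0,2)$-dense valuation functions $v_1, \ldots, v_n$, at a total cost of $O(f(n))$ queries on the $v_i$, and then invoking Theorem~\ref{findheavy}.

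First I would pass to duals. By Lemma~\ref{dual}, each $v_i^{*}$ is positive and $(1/2,\infty)$-dense, hence a legal input to $A$. Run $A$ on the tuple $(v_1^{*}, \ldots, v_n^{*})$; by Lemma~\ref{O1}, every query that $A$ issues to some $v_i^{*}$ is answered using $O(1)$ queries on the underlying $v_i$, so the whole simulation uses only $O(f(n))$ queries on the $v_i$'s. Let $X_1, \ldots, X_n$ be the proportional allocation that $A$ outputs, so $v_i^{*}(X_i) \le 1/n$ for every $i$. Since each $v_i^{*}$ is $(1/2,\infty)$-dense, Lemma~\ref{light} guarantees that at least $n/3$ of the $X_i$ are light with respect to $v_i^{*}$. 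For every such $i$ I would output the dual piece $(X_i)_{v_i^{*}}^{*}$: Lemma~\ref{heavy} applied to the positive function $v_i^{*}$ makes this piece heavy with respect to $(v_i^{*})^{*}$, and by Observation~\ref{dualdual} the latter equals $v_i$. Thus the simulation exhibits, for at least $n/3$ indices $i$, an explicit heavy piece of $v_i$ while charging only $O(f(n))$ queries to the $v_i$'s.

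The main obstacle is converting this per-function heavy-piece discovery into an $\Omega(n \log n)$ total query bound. Theorem~\ref{findheavy} supplies the lower bound only for one function in isolation; what I need is that the $O(f(n))$ queries cannot all be concentrated on the (at most) $2n/3$ functions that escape the heavy-piece conclusion. Because Robertson--Webb queries are addressed to a specific valuation function and queries to different $v_i$ are independent, this follows from the standard adversarial/averaging argument in the style of \cite{edmonds2006cake}: choose each $v_i$ from the hard distribution supplied by the proof of Theorem~\ref{findheavy}; the subset of indices whose $X_i$ turns out to be light is determined only by the protocol's transcript, so the adversary can guarantee that it is not concentrated on the $v_i$'s that happened to receive few queries. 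A pigeonhole argument then forces $\Omega(\log n)$ queries on each of $\Omega(n)$ functions for which a heavy piece is produced, and hence $f(n) = \Omega(n \log n)$, which is exactly the claimed bound.
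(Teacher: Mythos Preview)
Your proposal is correct and follows the paper's proof essentially step for step: dualize the $(0,2)$-dense inputs via Lemma~\ref{dual}, simulate the chore-division protocol on the duals through Lemma~\ref{O1}, apply Lemma~\ref{light} to extract at least $n/3$ light pieces, and turn those into heavy pieces for the original $v_i$ via Lemma~\ref{heavy} and Observation~\ref{dualdual}. The only difference is that you make explicit the final counting step---running the Theorem~\ref{findheavy} adversary independently on each coordinate so that any $Y_i$ backed by fewer than $\Omega(\log n)$ queries can be completed to be non-heavy, forcing $\Omega(\log n)$ queries on each of the $\ge n/3$ successful indices---which the paper compresses into a single clause (``This along with Lemma~\ref{findheavy} implies that $T(n)=\Omega(n\log n)$'').
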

\begin{proof}
Suppose that the query complexity of a chore division protocol is $T(n)$.
Consider $n$ arbitrary positive $(0,2)$-dense valuation functions $v_1, v_2, \cdots, v_n$, and solve the proportional chore division problem for their dual functions $v^{*}_1, v^{*}_2, \cdots, v^{*}_n$. Let $X_1, X_2, \cdots, X_n$ be the pieces allocated to the players respectively. For every $v^*_i$, by Lemma \ref{O1}, we can answer each query on this function by making $O(1)$ queries on $v_i$. Therefore, we can find the proportional chore division for the dual valuation functions $v^{*}_1, v^{*}_2, \cdots, v^{*}_n$ using $O(T(n))$ queries.

According to the Lemma \ref{dual}, valuation functions  $v^{*}_1, v^{*}_2, \cdots, v^{*}_n$ are $(1/2,\infty)$-dense. Therefore, by Lemma \ref{light}, at least $n/3$ of the $X_1, X_2, \cdots, X_n$ are light with respect to the dual valuation functions. Let $Y_1, Y_2, \cdots, Y_n$ be the dual of pieces, where $Y_i$ is the dual of $X_i$ with respect to $v^*_i$, i.e., $Y_i = {X^*_i}_{v^*_i}$. Since the dual of $v^{*}_i$ is $v_i$, applying Lemma \ref{heavy} implies that at least $n/3$ of the dual pieces $Y_1, Y_2, \cdots, Y_n$ are heavy for the valuation functions $v_1, v_2, \cdots, v_n$. Since the protocol makes at most $O(T(n))$ queries, the pieces returned by this protocol are at most union of $O(T(n))$ intervals, so we can calculate the dual of all these pieces using $O(T(n))$ queries. Therefore we can find heavy piece for at least $n/3$ of the valuation functions using $O(T(n))$ queries. This along with Lemma \ref{findheavy} implies that $T(n)= \Omega( n \log n)$. Note that pieces $Y_1, Y_2, \cdots, Y_n$ are not necessarily disjoint. However, we only want to find a heavy piece for $\Omega(n)$ of valuation functions, and since this is not an allocation, $Y_1, Y_2, \cdots, Y_n$ can intersect.



\end{proof} 

\section{Lower Bound on Finding a Heavy Piece}
In this section we prove Theorem \ref{findheavy} and give an $\Omega( \log n)$ lower bound for finding a heavy piece in positive $(0,2)$-dense valuation functions. We introduce special types of valuation functions, called \textit{balanced-value-trees}, and present an adversarial strategy that gives an $\Omega( \log n)$ lower bound for finding a heavy piece on balanced-value-trees. Balanced-value-trees are very similar to \textit{value trees} valuation functions introduced in \cite{edmonds2006cake}, but value trees cannot be used for our problem since these valuation functions are not necessarily $(0,2)$-dense.

Assume that $n \ge 3^{11}$ is a power of three. A balanced-value-tree is a ternary balanced tree with $n$ leaves and depth $d=\log_3 n$. Each non-leaf node $v$ in the tree has three children, we use $l(u)$, $m(u)$ and $r(u)$ to denote its left, middle, and right child. Each node in the tree corresponds to an interval in $[0,1]$. For each node $u$, we use $I_u$ to denote the interval that corresponds to $u$, and we use $V(u)$ and $D(u)$ to denote the value and the density of the interval $I_u$. Let $r$ be the root of the tree, then $|I_r|=1$ and $V(r)=1$. For every non-leaf node $u$, its children partition the interval corresponding to $u$ into three disjoint intervals with the equal width, i.e., $I_{u} = I_{l(u)} \cup I_{m(u)} \cup I_{r(u)}$ and $|I_{l(u)}| = |I_{m(u)}| = |I_{r(u)}|= |I_u|/3$. It follows that width of every leaf in the tree is $1/n$. We call a node $u$ \textit{critical} if $D(u) \times \beta > 2$ where $\beta= 2^{6/ \ln(n)}$.

We label each edge in the tree such that the value of every node $u$ be the product of the label of edges along the path from the root to $u$. For a non-leaf critical node $u$, the label of edges between this node and its children are $1/3$ . Every other non-leaf node has an edge with label $\beta/3$ called a \textit{heavy edge}, and the label of its two other edges which we call them \textit{light} are $1/2-\beta/6$. Since $n \ge 3^{11}$, we have $1/3 \le \beta/3 < 1/2$, i.e., the value of every heavy edge is between $1/3$ and $1/2$. Also, the value of every light edge is between $1/4$ and $1/3$.  We assume that the valuation of every leaf in the tree is uniform, this means that for every leaf $u$, and an interval $I \subseteq I_u$, we have $V(I)=\dfrac{V(u) |I|}{|I_u|}$. Therefore we can find the valuation of an arbitrary interval using the tree. Note that, it follows from the definition of critical nodes and our labeling that all children of a critical node are also critical. 

\begin{lemma}
In a balanced-value-tree, the value of every non-leaf node is the sum of the values of its children.
\end{lemma}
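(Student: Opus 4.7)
The plan is a short case analysis on whether the non-leaf node $u$ is critical, exploiting the fact that $V(u)$ equals the product of edge labels on the root-to-$u$ path, so each child $c$ of $u$ satisfies $V(c) = V(u) \cdot \ell(u,c)$, where $\ell(u,c)$ is the label of the edge from $u$ to $c$. Consequently the claim reduces to verifying that the three edge labels descending from $u$ sum to $1$.

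First I would handle the critical case. By definition, when $u$ is critical every edge from $u$ to a child carries the label $1/3$, so $V(l(u)) = V(m(u)) = V(r(u)) = V(u)/3$ and their sum is exactly $V(u)$. Second, I would handle the non-critical case: $u$ has one heavy edge of label $\beta/3$ and two light edges of label $1/2 - \beta/6$. Then the sum of the three children's values is
\[
V(u)\Bigl(\tfrac{\beta}{3} + 2\bigl(\tfrac{1}{2} - \tfrac{\beta}{6}\bigr)\Bigr) = V(u)\Bigl(\tfrac{\beta}{3} + 1 - \tfrac{\beta}{3}\Bigr) = V(u),
\]
as required. Both computations are a single line, so there is essentially no obstacle — the lemma is just the statement that the labeling scheme has been defined to be \emph{conservative at every internal node}, and the arithmetic is arranged precisely so that the heavy-plus-two-light convex combination collapses to $1$.

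If desired I would conclude with one sentence pointing out that this additivity is exactly what lets us later identify $V(u)$ with an integral of the density over $I_u$ and in particular makes the uniform valuation inside each leaf consistent with the tree-wide totals, but no separate argument is needed for the lemma itself.
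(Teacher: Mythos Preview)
Your proposal is correct and essentially identical to the paper's own proof: both split into the critical and non-critical cases and verify that the three edge labels from $u$ to its children sum to $1$, yielding $V(l(u))+V(m(u))+V(r(u))=V(u)$. The only difference is cosmetic (you write $2(1/2-\beta/6)$ where the paper lists the two light terms separately), and your closing remark about consistency with the leaf-level uniform valuation is a harmless addition.
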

\begin{proof}
Consider a non-leaf node $u$, if node $u$ is critical then all its edges to its children are labeled $1/3$, therefore $V(l(u))+ V(m(u))+ V(r(u)) = (V(u)/3 + V(u)/3 + V(u)/3)= V(u)$. The lemma holds in this case.

Otherwise, we suppose that a node $u$ is not critical, therefore $V(l(u))+ V(m(u))+ V(r(u)) = V(u) (\beta/3 + 1/2-\beta/6 + 1/2-\beta/6) = V(u)$. This completes the proof.
\end{proof}

For a node $u$, we use $h(u)$, $q(u)$ and $z(u)$ to denote respectively the number of heavy, light, and other edges along the path from the root to $u$. In the following lemma we show that how we can compute the density of node $u$ from $h(u)$, $q(u)$ and $z(u)$.
\begin{lemma}
\label{dentree}
In a balanced-value-tree we have $D(u)=\beta ^ {h(u)} (3/2-\beta/2) ^ {q(u)}$ for every node $u$ in the tree.
\end{lemma}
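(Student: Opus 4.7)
The plan is to compute $D(u) = V(u)/|I_u|$ by unpacking both quantities along the root-to-$u$ path and then simplifying. The lemma really is a direct bookkeeping exercise, so the proposal is to set up clean expressions for the numerator and denominator and cancel.

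First I would pin down $|I_u|$. Since every non-leaf node partitions its interval into three equal subintervals, an easy induction on depth gives $|I_u| = (1/3)^{\mathrm{depth}(u)}$, where $\mathrm{depth}(u)$ is the number of edges on the path from the root to $u$. By the definitions of $h(u)$, $q(u)$, $z(u)$ this path length equals $h(u)+q(u)+z(u)$, so
\[
|I_u| \;=\; (1/3)^{h(u)+q(u)+z(u)} \;=\; 3^{-h(u)-q(u)-z(u)}.
\]

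Next I would unpack $V(u)$. By the labeling convention stated before the lemma, $V(u)$ equals the product of edge labels along the root-to-$u$ path, and each edge is of exactly one of three types: a heavy edge with label $\beta/3$, a light edge with label $1/2-\beta/6$, or an edge emanating from a critical node with label $1/3$. Grouping edges by type gives
\[
V(u) \;=\; (\beta/3)^{h(u)} \,(1/2 - \beta/6)^{q(u)} \,(1/3)^{z(u)}.
\]

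Dividing these two expressions makes the $(1/3)^{h(u)+q(u)+z(u)}$ factors cancel with the matching powers of $3$ arising from $1/|I_u|$, which leaves $\beta^{h(u)}$ from the heavy edges, $(3 \cdot (1/2 - \beta/6))^{q(u)} = (3/2 - \beta/2)^{q(u)}$ from the light edges, and a trivial $1^{z(u)}$ from the critical edges, giving exactly the claimed formula $D(u) = \beta^{h(u)}(3/2-\beta/2)^{q(u)}$. There is no substantive obstacle; the only point to be a bit careful about is that the three edge types are mutually exclusive and exhaustive, which is immediate from the paper's definition (edges out of critical nodes always have label $1/3$, edges out of non-critical nodes are either the unique heavy edge or one of the two light edges).
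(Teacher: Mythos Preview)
Your proposal is correct and follows exactly the same approach as the paper's own proof: write $D(u)=V(u)/|I_u|$, express $V(u)$ as the product of edge labels $(\beta/3)^{h(u)}(1/2-\beta/6)^{q(u)}(1/3)^{z(u)}$ and $|I_u|=(1/3)^{h(u)+q(u)+z(u)}$, and cancel. The only difference is that you spell out the justification for each factor a bit more carefully than the paper does.
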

\begin{proof}
By the definition of balanced-value-trees we have
\begin{align*}
D(u) & = \dfrac{V(u)}{|I_u|} \\
& = \dfrac{(\beta/3)^{h(u)} (1/2-\beta/6)^{q(u)} (1/3)^{z(u)} }{(1/3)^{h(u)+q(u)+z(u)}} \\
& = \beta ^ {h(u)} (3/2-\beta/2) ^ {q(u)} \,.
\end{align*}
\end{proof}

Now we are ready to show that balanced-value-trees are positive and $(0,2)$-dense.
\begin{lemma}
Consider any balanced-value-tree, the valuation that this tree represents is positive and $(0,2)$-dense.
\end{lemma}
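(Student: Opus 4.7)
The plan is to reduce the claim for an arbitrary subinterval to a claim about the densities of individual leaves, and then bound the leaf densities by induction on the depth of the tree.

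The first step relies on the fact that the valuation is uniform inside every leaf. For any non-empty subinterval $[x,y]$, letting $L_1,\ldots,L_k$ denote the leaves that $[x,y]$ meets, additivity and uniformity give $v(x,y) = \sum_i D(L_i)\,|L_i \cap [x,y]|$. Dividing through by $|[x,y]| = \sum_i |L_i \cap [x,y]|$ exhibits $D_v(x,y)$ as a convex combination of the numbers $D(L_i)$. So it suffices to show that every leaf density is strictly positive and at most $2$; the latter I would in fact establish for every node of the tree.

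Positivity is immediate from the hypothesis $n \ge 3^{11}$: this forces $6/\ln n < 1/2$, hence $1 < \beta < \sqrt{2}$, and the three possible edge labels $1/3$, $\beta/3$, and $1/2 - \beta/6$ are all strictly positive. Consequently every leaf has positive value, and any non-empty $[x,y]$ intersects some leaf in a set of positive width, so $v(x,y) > 0$.

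For the upper bound I would induct on depth, proving $D(u) \le 2$ for every node $u$. The root satisfies $D(r)=1$. For a non-root node $u$ with parent $p$, there are two cases. If $p$ is critical, the edge from $p$ to $u$ carries label $1/3$, so both value and width scale by $1/3$, giving $D(u) = D(p) \le 2$ by induction. If $p$ is not critical then by definition $D(p)\beta \le 2$. The unique heavy child of $p$ then has density $\beta \cdot D(p) \le 2$, while each light child has density $(3/2 - \beta/2)\,D(p) \le D(p)$, since $\beta > 1$ gives $3/2 - \beta/2 < 1$. In every case $D(u) \le 2$, as required.

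The main obstacle, and the point that demands real care, is the coupling with the critical-node mechanism. The threshold $D(u)\beta > 2$ is calibrated precisely so that just before a heavy edge would push the density above $2$, the node becomes critical and its outgoing edges switch to the neutral label $1/3$, freezing the density at its current (and still $\le 2$) value. Without this mechanism a long chain of heavy edges would drive the density unboundedly high, so the $(0,2)$-density bound rests entirely on the fact that this threshold is tuned exactly right.
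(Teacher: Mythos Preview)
Your proof is correct and follows essentially the same approach as the paper: reduce the density bound for arbitrary intervals to a bound on leaf densities, then control node densities via the critical/non-critical distinction. The only organizational difference is that you run a clean downward induction on all nodes, whereas the paper argues by contradiction on a hypothetical leaf of density greater than $2$ and traces back to the first critical ancestor; the underlying mechanism (the threshold $D(p)\beta \le 2$ for non-critical $p$, and density-freezing via the $1/3$ labels below critical nodes) is identical.
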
 
\begin{proof}
Our goal is to show that density of every non-empty interval $I$ is at most $2$ and greater than $0$. By the definition of the balanced-value-tree, the valuation for every interval $I$ is greater than zero, so as its density.

Now it remains to show that density of every interval is at most $2$. Assume for the sake of contradiction that there is an interval with density greater than $2$, it follows that there is at least one leaf in the tree with the density greater than $2$. Let $u$ be this leaf. Consider the case that $u$ is a critical leaf, let $u_1,u_2,\cdots,u_k$ be the path from the root to $u$ where $u_1=r$ and $u_k=u$. Let $u_t$ be the first node in this sequence that is critical. $t$ is larger than one, since the root is not critical. It follows from the definition of the balanced-value-trees that $D(u)= D(u_t) \le \beta D(u_{t-1})$. Since $u_{t-1}$ is not critical, it implies that $ \beta D(u_{t-1}) \le 2$, therefore $D(u) \le 2$ which is a contradiction.

Otherwise, suppose that $u$ is not critical, therefore $D(u) < \beta D(u) \le 2$, thus we have the contradiction for both cases.  
\end{proof}

We say that a non-critical leaf $u$ in the tree is \textit{rich} if $D(u) \ge 1/2$. The following lemma shows that we can use any protocol that finds a heavy piece to find either a rich or a critical leaf in a balanced-value-tree.

\begin{lemma}
\label{reduc}
If a protocol finds a heavy piece in positive $(0,2)$-dense valuation functions with at most $T(n)$ queries, then using $O(T(n))$ queries we can find either a rich or a critical leaf in a balanced-value-tree. 
\end{lemma}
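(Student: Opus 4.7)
The plan is to run the assumed heavy-piece protocol on the valuation $v$ induced by an arbitrary balanced-value-tree and then read off a rich or critical leaf from the returned piece. The preceding lemma guarantees that every balanced-value-tree valuation is positive and $(0,2)$-dense, so the hypothesized protocol applies and, using $T(n)$ queries, produces a heavy piece $X$ with $|X| \le 1/n$ and $v(X) \ge 1/(2n)$.

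In the Robertson--Webb model each query introduces $O(1)$ new points of $[0,1]$, so after $T(n)$ queries the output piece is a union of $k = O(T(n))$ intervals $X = \bigcup_{j=1}^{k} I_j$ whose endpoints are all known to the reduction. Heaviness gives $v(X)/|X| \ge 1/2$, so by averaging at least one constituent interval $I_{j^{*}}$ satisfies $v(I_{j^{*}})/|I_{j^{*}}| \ge 1/2$. We locate such a $j^{*}$ by issuing one $\ev_v$ query per interval, for $O(T(n))$ additional queries.

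The key structural step uses that $|I_{j^{*}}| \le |X| \le 1/n$, which is exactly the width of a leaf; hence $I_{j^{*}}$ meets at most two consecutive leaves $u_1$ and (possibly) $u_2$. These leaves can be identified without any queries, since leaf positions are the fixed multiples of $1/n$. Because the valuation is uniform inside each leaf, the density of $I_{j^{*}}$ is a convex combination of $D(u_1)$ and $D(u_2)$, so at least one of them is $\ge 1/2$. Determining which one requires at most two further $\ev_v$ queries, on $I_{j^{*}} \cap I_{u_1}$ and $I_{j^{*}} \cap I_{u_2}$. Let $u$ be the resulting leaf with $D(u) \ge 1/2$: if $u$ is critical we output $u$ as a critical leaf; otherwise $u$ is non-critical with $D(u) \ge 1/2$, which is precisely the definition of rich.

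The total cost is $T(n) + O(T(n)) = O(T(n))$ queries, as required. The only conceptual content, and the reason no further structure of the tree is needed, is the smallness of a heavy piece: a width-$\le 1/n$ piece of density $\ge 1/2$ cannot spread its mass over more than two adjacent leaves, so averaging forces a single leaf to be dense. The main thing to verify carefully is the bound $k = O(T(n))$ on the number of intervals in the returned piece, which is the same Robertson--Webb accounting the authors invoke in the proof of their main theorem, and the clean interaction between convex-combination averaging and the per-leaf uniformity built into the balanced-value-tree construction.
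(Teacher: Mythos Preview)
Your proof is correct and follows essentially the same route as the paper: run the heavy-piece protocol on the balanced-value-tree valuation, use averaging over the $O(T(n))$ constituent intervals to extract one interval of density at least $1/2$ and width at most $1/n$, observe it overlaps at most two leaves, and pick the denser one. The only differences are expository---you spell out the convex-combination step via per-leaf uniformity and the Robertson--Webb accounting for $k=O(T(n))$ a bit more explicitly than the paper does---but the argument is the same.
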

\begin{proof}
Consider the valuation function derived from the balanced-value-tree. Let piece $P$ be the output of the protocol for this valuation function. Since $P$ is heavy, we have $|P| \le 1/n$ and $V(P) \ge 1/2n$. Therefore, $D(P) \ge 1/2$. Since $P$ is the union of at most $O(T(n))$ intervals, there is an interval $I$ with the density at least $1/2$. We can find this interval with $O(T(n))$ queries. Since $|I| \le |P| \le 1/n$, the interval $I$ overlaps with at most two leaves, and one of these leaves has a density at least $1/2$ which can be found with $O(1)$ queries. The density of this leaf is at least $1/2$, thus it is either critical or rich.
\end{proof}

For the rest of the section, our goal is to show that any protocol for finding a leaf which is either rich or critical must make $\Omega( \log n)$ queries in the worst case. To this end, we first show that $h(u) = \Omega(\log n)$ for critical and rich leaves.
\begin{lemma}
\label{tooheavies}
Let $u$ be a leaf which is either rich or critical, then $h(u) > (\ln n)/6-1$.
\end{lemma}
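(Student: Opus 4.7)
The plan is to split on whether $u$ is critical or rich, using Lemma \ref{dentree} in both cases: $D(u) = \beta^{h(u)}((3-\beta)/2)^{q(u)}$. The critical case is immediate: $\beta > 1$ gives $(3-\beta)/2 \le 1$, so $((3-\beta)/2)^{q(u)} \le 1$ and hence $\beta^{h(u)} \ge D(u) > 2/\beta$, i.e.\ $\beta^{h(u)+1} > 2$. Substituting $\beta = 2^{6/\ln n}$ yields $h(u)+1 > \ln n / 6$, which is exactly the bound claimed.

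For the rich case, the structural observation noted right before Lemma \ref{dentree} is that every descendant of a critical node is critical, so a non-critical leaf has no critical ancestor; thus $z(u) = 0$ and $h(u) + q(u) = \log_3 n$. Taking logs in $D(u) \ge 1/2$ gives $h(u)\ln\beta \ge q(u)\,|\ln((3-\beta)/2)| - \ln 2$. The key inequality I would use is $\beta \cdot ((3-\beta)/2)^2 < 1$ for $\beta \in (1,3)$, which follows from the polynomial identity $(1+2x)(1-x)^2 = 1 - x^2(3-2x)$ with $x = (\beta-1)/2$; taking logs, this rewrites as $|\ln((3-\beta)/2)| > (\ln\beta)/2$. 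Plugging this into the previous display, dividing through by $\ln\beta$, using $\ln 2/\ln\beta = \ln n/6$, and finally substituting $q(u) = \log_3 n - h(u)$ rearranges to $h(u) > \log_3 n / 3 - \ln n/9 = \ln n\,(3-\ln 3)/(9\ln 3)$.

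Finishing off, I would verify $\ln n\,(3-\ln 3)/(9\ln 3) \ge \ln n/6 - 1$, which rearranges to $\ln n \cdot (6 - 5\ln 3)/(18\ln 3) \ge -1$; the left side is in fact positive since $\ln 3 < 6/5$, so the inequality holds trivially with slack. The degenerate subcase $q(u) = 0$ is also trivial because then $h(u) = \log_3 n > \ln n/6$. The main obstacle I expect is locating the right comparison between $\ln\beta$ and $|\ln((3-\beta)/2)|$ in the rich case: a Taylor expansion around $\beta = 1$ only yields the ratio $2$ modulo error terms that would have to be tracked, whereas the polynomial identity above delivers the strict factor of $2$ with no approximation, so the downstream arithmetic closes cleanly for all $n \ge 3^{11}$.
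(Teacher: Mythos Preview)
Your proof is correct. The critical case is handled identically to the paper. For the rich case, the paper argues by contradiction: it assumes $h(u)\le (\ln n)/6$, observes that the density is then at most
\[
f(n):=\beta^{(\ln n)/6}\bigl((3-\beta)/2\bigr)^{\log_3 n-(\ln n)/6},
\]
and asserts (leaving the verification to the reader) that $f$ is increasing in $n$ with limit $2^{3/2-3/\ln 3}\approx 0.426<1/2$, contradicting $D(u)\ge 1/2$. Your route is direct instead: the polynomial identity $(1+2x)(1-x)^2=1-x^2(3-2x)$ with $x=(\beta-1)/2$ yields the clean comparison $|\ln((3-\beta)/2)|>(\ln\beta)/2$ for all $\beta\in(1,3)$, after which elementary algebra gives $h(u)>\ln n\,(3-\ln 3)/(9\ln 3)$, which is in fact strictly larger than $(\ln n)/6$ (not merely $(\ln n)/6-1$). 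Your approach trades the limit computation and the unjustified monotonicity claim for a one-line algebraic identity; both end up delivering a bound slightly stronger than what the lemma states.
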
 
\begin{proof}
First, we assume that $u$ is a critical leaf, therefore $\beta D(u) >2 \Rightarrow D(u) > 2/\beta$. It implies from lemma \ref{dentree} that
$$2/\beta < D(u) = \beta ^ {h(u)} (3/2-\beta/2) ^ {q(u)} \le \beta ^ {h(u)} \,.$$
Therefore,
$$ \beta^{h(u)+1} > 2 \Rightarrow h(u) > (\log_{\beta} 2) -1 = (\ln n)/6-1 \,.
$$
Otherwise, suppose that $u$ is a rich leaf, therefore $D(u) \ge 1/2$. Since $u$ is a leaf and is not critical, we have $h(u)+q(u)=d=\log_3 n$. We complete the proof by showing that $h(u)$ must be greater than $(\ln n)/6$. For the sake of contradiction suppose that $h(u) \le (\ln n)/6$, therefore the maximum density that $u$ can have is
$$
\beta ^ { (\ln n)/6} (3/2-\beta/2)^{\log_3 n - (\ln n)/6} \,.
$$
Let denote $f(n)$ to be the function above.

It is easy to verify that function $f$ is an increasing function in $n$. Therefore,
$$
f(n) \le \lim_{x \to \infty} f(x) = 2^{3/2 - 3/\ln 3} \approx 0.426 < \dfrac{1}{2} \,.$$
which is a contradiction. Thus, for this case $h(u) > (\ln n)/6$.
\end{proof}

We now show that any protocol that finds either a critical or a rich leaf must make $\Omega( \log n)$ quries in the worst case. We give an adversarial strategy very similar to the strategy represented in \cite{edmonds2006cake} which prevents any protocol to find a critical or a rich leaf with less than $\Omega( \log n)$ queries. Consider a balanced-value-tree. At the beginning, the label of each edge is unknown to the protocol. However, after each query instead of answering the query, we reveal the label of some edges in the tree such that the answer of the query can uniquely be determined from the revealed edges. Let $u$ be a node in the tree and $u_1, u_2, \cdots, u_k$ be the path from the root to $u$ where $u_1 = r$ and $u_k = u$. We say that node $u$ is \textit{revealed} if for every $1 \le i \le k$, all labels of node $u_i$ to its children are revealed. The following lemma shows the information that a player can get from the revealed labels.  
\begin{lemma} 
(Lemma 2.2 in \cite{edmonds2006cake})
\label{information}
\begin{itemize}
\item For any revealed node $u$, the value of $V(u)$ can be determined.
\item For any revealed node $u$, values $V(0,Left(I_u))$ and $V(0,Right(I_u))$ can be computed.
\item Let $u,v$ be two revealed leaves, and $x \in I_u$ and $y \in I_v$, then values $V(0,x)$ and $V(x,y)$ can be computed.
\item Let $u$ be a revealed leaf, $x \in I_u$, $\alpha$ be a value and $v$ be the leaf that contains a point $y$ such that $V(x,y)= \alpha$, then the least common ancestor of $u$ and $v$ can be computed. 
\end{itemize}
\end{lemma}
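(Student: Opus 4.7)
My plan is to prove the four items in order, since each builds on the previous ones. Throughout, write $u_1=r, u_2, \ldots, u_k=u$ for the path from the root to $u$, and note that every ancestor of a revealed node is itself revealed.

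Items (i)--(iii) reduce to careful bookkeeping. For (i), by the definition of a balanced-value-tree, $V(u)$ equals the product of the labels of the edges $(u_i, u_{i+1})$ for $1 \le i \le k-1$, and each such label is among the children-labels of the revealed $u_i$. For (ii), I would decompose $[0, Left(I_u)]$ into a disjoint union of sibling-intervals: at each $u_i$ ($1 \le i \le k-1$), take the children of $u_i$ whose intervals lie strictly to the left of $I_{u_{i+1}}$. For any such sibling $s$, the edges on the root-to-$s$ path consist of the already-revealed root-to-$u_i$ edges plus the edge $(u_i, s)$, so $V(s)$ is computable as in (i), even though $s$ need not itself be a revealed node. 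Summing over $i$ gives $V(0, Left(I_u))$, and $V(0, Right(I_u)) = V(0, Left(I_u)) + V(u)$. For (iii), the uniformity on leaves gives $V(Left(I_u), x) = V(u)(x - Left(I_u))/|I_u|$, so $V(0, x) = V(0, Left(I_u)) + V(Left(I_u), x)$ by (i) and (ii); applying the same reasoning to the revealed leaf $v$ containing $y$ and subtracting yields $V(x, y)$.

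The hard part is item (iv), where we want the LCA of $u$ with a leaf $v$ about which we have no direct structural information. Assuming the standard Robertson--Webb convention $y \ge x$, I would compute $V(0, y) = V(0, x) + \alpha$ from (iii). Each ancestor $u_j$ of $u$ is revealed, so (ii) gives both $V(0, Left(I_{u_j}))$ and $V(0, Right(I_{u_j}))$. The interval $I_{u_j}$ contains $y$ iff $V(0, y)$ lies in $[V(0, Left(I_{u_j})), V(0, Right(I_{u_j}))]$, and the LCA of $u$ and $v$ is exactly the deepest $u_j$ for which this holds---unique because the ancestors of $u$ form a chain under inclusion. This deepest index is found by scanning from $u$ upward until the inclusion first becomes satisfied, so no information beyond what is already revealed is needed.
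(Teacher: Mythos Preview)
The paper does not give its own proof of this lemma: it is stated verbatim as Lemma~2.2 of \cite{edmonds2006cake} and simply cited, with the details deferred to that reference. So there is no in-paper argument to compare against.

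That said, your proposal is a correct and standard way to establish the result. Items (i)--(iii) are exactly the right bookkeeping: the crucial observation you make in (ii), that the left-siblings $s$ along the root-to-$u$ path have computable $V(s)$ even though $s$ itself need not be ``revealed'' in the technical sense, is precisely what is needed. For (iv), your argument is sound: the key is that $V(0,y)=V(0,x)+\alpha$ is determined by the revealed data (via (iii)), and since the valuation is positive the map $z\mapsto V(0,z)$ is strictly increasing, so the membership test $V(0,y)\in[V(0,Left(I_{u_j})),V(0,Right(I_{u_j}))]$ is equivalent to $y\in I_{u_j}$ regardless of how the unrevealed labels are later fixed. You might note explicitly that this equivalence is what guarantees the LCA is \emph{well-defined} from the revealed information alone (i.e., independent of any future label assignments), which is really the content of the lemma in its adversarial context. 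The restriction to $y\ge x$ is harmless; the case $y<x$ is symmetric.
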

We are now ready to provide an adversarial strategy against finding a critical or a rich leaf.
\begin{lemma}
\label{strategy}
The query complexity of any protocol that finds either a critical or a rich leaf in a balanced-value-tree is $\Omega( \log n)$.
\end{lemma}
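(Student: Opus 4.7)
The plan is to mount an adversarial argument along the lines of the corresponding proof in \cite{edmonds2006cake}. The adversary lazily maintains a balanced-value-tree in which none of the non-critical edge-labels is committed at the outset; only the labels of edges descending from critical nodes (which are all forced to be $1/3$) are fixed. Whenever the protocol issues an $\ev$ or $\ct$ query, the adversary reveals just enough additional labels so that the answer can be uniquely determined via Lemma \ref{information}, i.e.\ the labels along the root-to-leaf paths of the one or two leaves that the query's endpoints fall in.

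The crux is to bound, for any fixed root-to-leaf path $\pi$, the number of \emph{heavy} edges the adversary is forced to commit along $\pi$ per query. When a query is answered, the adversary must reveal the labels at each non-critical ancestor of a newly probed leaf. At every such freshly revealed non-critical node, the adversary still has the freedom to pick which of the three outgoing edges receives the heavy label $\beta/3$; it always picks one that is \emph{not} on $\pi$ (or on the path to any other leaf the protocol has already examined). Because only $O(1)$ new leaves are touched per query, and at each such newly revealed non-critical ancestor the adversary has at least two candidate children away from $\pi$, each query contributes at most $O(1)$ heavy edges to $\pi$.

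Putting the pieces together, suppose the protocol returns a leaf $u$ that is either rich or critical. By Lemma \ref{tooheavies}, $h(u) > (\ln n)/6 - 1 = \Omega(\log n)$, so at termination at least $\Omega(\log n)$ heavy edges must have been revealed along the root-to-$u$ path. Combined with the $O(1)$-heavy-edges-per-query bound above, this forces the protocol to issue $\Omega(\log n)$ queries.

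The main obstacle is to formalise the adversary's deferred commitment: one has to verify that at every step the partially revealed labels can be extended to some full, consistent balanced-value-tree in which $\pi$ still has the minimum possible number of heavy edges. Concretely, one must check that placing the heavy edge on an arbitrary off-$\pi$ child is consistent with the uniform structure inside critical subtrees, with the positivity and $(0,2)$-density guarantee, and with all earlier revealed labels (including that all descendants of a critical node remain critical). Since the tree is ternary with only a single heavy child per non-critical node, and the set of leaves probed so far grows by only $O(1)$ per query while $n$ is huge, such a consistent extension avoiding $\pi$ always exists, and the careful bookkeeping yields the claimed $O(1)$ heavy-edge increment per query.
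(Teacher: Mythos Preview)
Your overall approach---an adversarial deferred-commitment argument mirroring \cite{edmonds2006cake}---is the same as the paper's, and the endgame (invoke Lemma~\ref{tooheavies} on the output leaf) is correct. However, there is a genuine gap in how you specify the adversary's strategy.

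You write that at each freshly revealed node the adversary places the heavy edge ``not on $\pi$''. But $\pi$ is the path to the leaf the protocol eventually outputs, which the adversary cannot know while still answering queries; and if $\pi$ is instead meant to range over all root-to-leaf paths, then no single placement of the heavy child can avoid every $\pi$ at once. The adversary's strategy must be defined independently of $\pi$. What the paper actually does is place the heavy edge off the \emph{current query path}: for $\ev(x,y)$, off the paths to the leaves containing $x$ and $y$; for $\ct(x,\alpha)$, off the path to the leaf containing $x$ and off the path to the leaf that will contain the answer $y$. The invariant then maintained is that after $m$ queries \emph{every} root-to-leaf path carries at most $2m$ revealed heavy edges, because any fixed path shares with each newly revealed query path a prefix of light edges and can diverge from it at most once. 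With this uniform bound in hand, whatever leaf $u$ the protocol outputs, the adversary can still complete the unrevealed labels on the path to $u$ with light edges, keeping $h(u)$ small.

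A second omission is the handling of $\ct$ queries. The leaf containing the answer point $y$ is not determined until labels are committed along the way, so the adversary must choose the heavy-edge placement at each node (based on the residual value $\gamma$ to be located in that subtree) so that $y$ always falls into a light-edge child; the paper makes this explicit using $\beta/3 < 1/2$. Your sketch does not address this, and your ``main obstacle'' paragraph about consistent extensions does not cover this interaction between label placement and the location of $y$.
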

\begin{proof}
We follow the following strategy for the first $\lfloor ((\ln n)/6 -1)/2 \rfloor$ queries. The strategy is very similar to the one in \cite{edmonds2006cake} with some minor changes, and we just highlight the main ideas in this strategy and the details of it can be found in the original paper. The strategy reveals the label of some edges after each query such that the answer of the query can be computed and keeps the following invariants. First, for each node in the tree either none or all of its edges to its children are revealed. Second, after $m$ queries, in any path from the root to a leaf at most $2m$ heavy edges are revealed. Three, all the revealed nodes form a connected component in the tree. Now we show that how we reveal the label of edges for each query.
\begin{itemize}
\item For ${\ev}(x,y)$ query, let $u_k$ be the leaf containing $x$, $u_1, u_2, \cdots, u_k$ be the path from the root to $u$, and $u_l$ be the first unrevealed node. For each $u_i$, $l \le i <k$, we reveal the label of $u_i$ to its children such that the edge between $u_i$ and $u_{i+1}$ be light, and $u_i$ has exactly two light edges and one heavy edge. Similarly, we reveal the edges in the same way for the point $y$.  Lemma \ref{information} shows that the answer of this query can be computed using these edges.
\item For ${\ct}(x,\alpha)$ query, we reveal the edges in the same way as the last case for the point $x$. Let $y$ be a point such that $V(x,y) = \alpha$. Using Lemma \ref{information}, we can find the least common ancestor of leaves containing $x$ and $y$. Let $u$ be this node. Let $u'$ be the first unrevealed node from $u$ towards the leaf containing $y$. Let $\gamma$ be the value that we must find in the subtree with the root $u'$. Recall that label of heavy edges are $1/3 \le \beta/3 < 1/2$. If $\gamma > \beta/3$, then we reveal the label of the first children to be heavy, and two others to be light. Otherwise, we reveal the first two edges to be light and the last one to be heavy. Since $\beta/3 < 1/2$, the edge between $u'$ and its child that contains $y$ is always light. We reveal the edges in the same way for the child which contains $y$ and do the same thing until we reveal the leaf containing $y$. 
\end{itemize}
We follow this strategy for the first $\lfloor ((\ln n)/6 -1)/2 \rfloor$ queries. Since $n \ge 3^{11}$, we have $((\ln n)/6 -1)/2 >0$. After these queries, at most $(\ln n)/6-1$ heavy edges are revealed in any path from the root to a leaf. By lemma \ref{tooheavies}, any critical or rich leaf must have more than $(\ln n)/6-1$ heavy edges in its path to the root.
Therefore, the protocol could not be sure about any critical or rich leaf. After these queries, the density of no interval is greater than $2$, since every critical node should have more than $(\ln n)/6-1$ heavy edges in its path to the root and no node is revealed to be critical. Therefore, this is a valid labeling of a balanced-value-tree, and no protocol can find a critical or a rich leaf against this strategy with less than $\lfloor ((\ln n)/6 -1)/2 \rfloor$ queries; hence, the query complexity of any protocol is $\Omega( \log n)$. 
\end{proof}

Now we can prove Theorem \ref{findheavy}.
\findheavy*
\begin{proof}
Let $T(n)$ be the query complexity of a protocol that finds a heavy piece. By Lemma \ref{reduc}, we can find either a critical or a rich leaf using $O(T(n))$ queries. Since query complexity of finding a critical or a rich leaf is $\Omega( \log n)$, we have $T(n) = \Omega( \log n)$.
\end{proof}
\bibliographystyle{apalike} 
\bibliography{chore}
\newpage

\end{document}